\let\autocite\cite
\newcommand{\sequent}[4]{\ensuremath{#1 \vdash^{#2}_{#3} #4}}
\newcommand{\Type}[0]{\ensuremath{\mathbf{Type}}}
\newcommand{\isType}[1]{\ensuremath{#1 \; \mathbf{type}}}
\newcommand{\bigoh}[1]{\ensuremath{\mathcal{O}\left(#1\right)}}
\newcommand{\N}[0]{\ensuremath{\mathbb{N}}}
\newcommand{\R}[0]{\ensuremath{\mathbb{R}}}
\newcommand{\rulesep}[0]{\quad \quad \quad}
\newcommand{\Infer}[3][]{\ensuremath{\infer[\textit{#1}]{#3}{#2}}}
\newcommand{\interp}[1]{\ensuremath{\llbracket #1 \rrbracket}}
\newcommand{\splice}[1]{\ensuremath{\big( #1 \big) }}
\newcommand{\mathOp}[3]{\ensuremath{\mathop{\textit{#1}_{#2}#3}}}
\newcommand{\iter}[4]{\mathOp{iter}{}{\left(\lambda{}{#1}.{#2}, {#3}, {#4}\right)}}
\newcommand{\subst}[2]{\ensuremath{\llbracket \nicefrac{#2}{#1} \rrbracket}}
\newcommand{\occurs}[2]{\mathOp{\textit{occurs}}{}{\left({#1},{#2}\right)}}
\newcommand{\inr}[1]{\mathOp{in}{r}{\left(#1\right)}}
\newcommand{\inl}[1]{\mathOp{in}{l}{\left(#1\right)}}
\newcommand{\dbi}[1]{\mathop{\mathbb{B}}{\left(#1\right)}}
\newcommand{\prl}[1]{\mathOp{prj}{l}{#1}}
\newcommand{\prr}[1]{\mathOp{prj}{r}{#1}}
\newcommand{\pair}[2]{\ensuremath{\left(#1,#2\right)}}
\newcommand{\depth}[1]{\ensuremath{|#1|_d}}
\newcommand{\depprod}[5]{\ensuremath{\Pi(#1:#2)\rightarrow^{#3}_{#4}{#5}}}
\newcommand{\witheq}[1]{\overline{\overline{#1}}}
\renewcommand{\max}[2]{\ensuremath{\mathop{max}(#1,#2)}}
\newcommand{\eval}[3]{#2 \big\Downarrow_{#1} #3}
\newcommand{\rec}[3]{\mathop{\textit{rec}}(#1, #2, #3)}
\newcommand{\lastpage}[1]{}
\DeclareMathOperator{\unitType}{\circ}
\DeclareMathOperator{\unitValue}{∙}
\DeclareMathOperator{\Zero}{zero}
\DeclareMathOperator{\aSucc}{succ}
\DeclareMathOperator{\Nat}{Nat}
\DeclareMathOperator{\Bound}{Bound}
\DeclareMathOperator{\Var}{Var}
\newcommand{\bOne}[0]{\mathbf{1}}
\newcommand{\bSucc}[1]{\mathbf{S}(#1)}
\newcommand{\hyper}[3]{\mathop{hyper}(#1,#2,#3)}
\newcommand{\ack}[2]{\mathop{ack}(#1,#2)}
\newenvironment{left-cases}[0]{\left\lbrace\begin{aligned}}{\end{aligned}\right.}
\newenvironment{caseof}[1]{\mathop{\textit{case}} \; #1 \mathop{\textit{of}} \; \begin{left-cases}}{\end{left-cases}}
\newcommand{\case}[2]{#1 & \rightarrow & #2; \\}
\DeclareMathOperator{\alt}{\big|}
\DeclareMathOperator{\free}{free}
\title{Consistent Ultrafinitist Logic}
\titlerunning{Ultrafinitist Logic}
\author{Michał J. {Gajda}}{Migamake Pte
Ltd,  \url{https://www.migamake.com} }{mjgajda@migamake.com}{https://orcid.org/0000-0001-7820-3906}{}
\authorrunning{MJ Gajda}
\keywords{ultrafinitism,bounded Turing completeness,logic of
computability,decidable logic,explicit complexity,strict
finitism} 
\providecommand{\tightlist}{%
  \setlength{\itemsep}{0pt}\setlength{\parskip}{0pt}}
\begin{document}

\maketitle

\begin{abstract}
Ultrafinitism postulates that we can only compute on relatively short
objects, and numbers beyond a certain value are not available. This
approach would also forbid many forms of infinitary reasoning and allow
removing certain paradoxes stemming from enumeration theorems. For a
computational application of ultrafinitist logic, we need more than a
proof system, but a logical framework to express both proofs, programs,
and theorems in a single framework. We present its inference rules,
reduction relation, and self-encoding to allow direct proving of the
properties of ultrafinitist logic within itself. We also provide a
justification why it can express all bounded Turing programs, and thus
serve as a ``logic of computability''.
\end{abstract}

Ultrafinitism
\autocite{explicit-finitism,podnieks-finitism,yessenin-criticism,gefter,lenchner}
postulates that we can only reason and compute relatively short
objects\footnote{For example, a computation run by computer the size of
  Earth within the lifespan of Earth so far. Of the order of \(10^{93}\)
  as described by \autocite{bremermann}.}

Tighter limit can be established using petahertz frequency
\autocite{petahertz} as a quantum limit for light-based systems giving
\(10^{33}\) serial cycles during the lifetime of Earth.{]}
\autocite{computational-capacity-of-universe,bremermann,feasible-numbers,ultimate-limits-of-computation,limits-of-computation,small-universe},
and numbers beyond certain value are not available
\autocite{yessenin-criticism,feasible-numbers}. Some philosophers also
question the physical existence of real numbers beyond a certain level
of accuracy \autocite{gisin}. This approach would also forbid many forms
of infinitary reasoning and allow removing many from paradoxes stemming
from a countable enumeration.

However, philosophers\footnote{We cite physical and metaphysical
  arguments from previous work equally.} still disagree on whether such
a ultrafinitist logic could be consistent
\autocite{wangs-paradox,strict-finitism-refuted-q}, while constructivist
mathematicians claim that \emph{``no satisfactory developments exist''}
\autocite{constructivism-review}. We present a proof system based on the
Curry-Howard isomorphism \autocite{curry-howard} and explicit bounds for
computational complexity that answers the question.

This approach invalidates logical paradoxes that stem from a profligate
use of transfinite reasoning
\autocite{benardete,clowns,finitism-vs-pra}, and assures that we only
state problems that are decidable by the limit on input size, proof
size, and the number of steps. This explicitly excludes phenomena of
undecidability by excluding them from our realm of valid statements
\autocite{undecidable-theories}. Our approach allows to express all
Turing Machine programs that are bounded \autocite{time-bounded-turing}
by proof terms of the logic\footnote{Up to fixed emulation overhead, see
  Emulation Complexity below in section \ref{emulation-complexity}.}.

Explicitly bounding computational complexity also prevents a famous
\emph{paradox of inference}. This paradox of classical theory of
semantic information
\autocite{semantic-information,paradox-of-non-trivial-inference}
unjustly labels all mathematical proofs as ``trivial information'',
because it can be inferred from the axioms.

\section{Introduction}\label{introduction}

By \emph{finitism} we understand the mathematical logic that tries to
absolve us from transfinite inductions \autocite{explicit-finitism}.
\emph{Ultrafinitism}\footnote{Also called \emph{strict finitism} by
  \autocite{strict-finitism-refuted-q}.} goes even further by
postulating a definite limit for the complexity of objects that we can
compute with
\autocite{ultimate-limits-of-computation,limits-of-computation,feasible-numbers,computational-capacity-of-universe,bremermann,finite-entropy}.
We assume these without committing ourselves to adopt a fixed number as
a limit.

In order to permit only \emph{ultrafinitist} inferences, we postulate
\emph{ultraconstructivism}: we permit only \emph{constructive} proofs
with a \emph{deadline}. That is constructions that are not just strictly
computable, but for which there is a \emph{upper bound} on the amount of
computation that is needed to resolve them. That means that we forbid
proofs that go for an arbitrarily long time and require \emph{totality}
for any proof or computation.

For the sake of generality, we will attach this \emph{deadline} in the
form of \emph{bounding function} that takes as arguments \emph{size
variables} (\emph{depths of input terms}), and outputs the upper bound
on the number of steps that the proof is permitted to make (along with
upper bound on the size of the output). \emph{Depths of input terms} are
a convenient upper bound on the complexity of normalized proof terms.
(Normalized proof terms are those with opportunity for \emph{cut} or
\(\beta\)-reduction.)

Our approach is inspired by the \emph{Curry-Howard isomorphism} -- the
fact that the constructive proofs always correspond to executable
programs. It also follows \emph{inverse Curry-Howard isomorphism}: the
philosophy that rejects logical inference which do not correspond to
programs computable in our universe\footnote{Given that all formal
  proofs in mathematically strict formal systems can be considered
  finite numbers of connected steps in computation or hypercomputation.}

The philosophy of \emph{ultraconstructivism} would similarly purport
that while transfinitary logics may be consistent, they are correspond
to objects \emph{,,out-of-this-world'\,'}\footnote{Extra-universal.},
since our observable universe is inherently finitary
\autocite{computational-capacity-of-universe}.

\subsection{Contributions}\label{contributions}

To enumerate chief contributions of this paper:

\begin{enumerate}
\def\labelenumi{\arabic{enumi}.}
\tightlist
\item
  First consistent \textbf{ultrafinitist} logic to the knowledge of the
  author. It allows bounding by any and arbitrarily large computational
  limit (section \ref{bounds}), and consistent reasoning resolving
  Wang's Paradox\autocite{wangs-paradox}. Thus this logic is first
  formal theory to claim a purely philosophical legacy of ultrafinitism
  \autocite{yessenin-criticism,feasible-numbers,strict-finitism-refuted-q,varieties-of-finitism}.
\item
  A decidable logic having \textbf{meta-theory expressible in itself}
  (see section \ref{self-encoding}).
\item
  Clear and comprehensive demonstration that \textbf{assumptions of
  Gödel are too strong} \autocite{goedel-undecidability,goedel-works}
  when considering bounded logics. This is because decidability of
  bounded term can be demonstrated by simple enumeration with a more
  generous bound. Most proofs are elementary by enumeration. Proof of
  consistency is done by reduction to widely known intuitionistic logic
  to make the paper accessible to second year students of computer
  science or first year graduates in constructive mathematics (section
  \ref{properties}).
\item
  Candidate for a most expressive logic that allows explicitly bounded
  computable functions. Ideal \textbf{logic of computability} must
  forbid all reasoning about uncomputable, and only allow computable
  statements (with proofs corresponding to \emph{bounded Turing Machine}
  programs.) For all bounds that can be computed within the framework,
  we can also compute the function bounded by these (section
  \ref{turing-completeness}).
\item
  Placing \textbf{ultrafinitism} and \textbf{ultraconstructivism} as
  candidate for realization of \emph{computable foundations for
  mathematics} programme (discussion in section
  \ref{computable-foundations}).
\item
  All statements with bounds having a proof without bounds have a proof
  with bounds too\footnote{Thanks to anonymous reviewer for pointing
    importance of this result.} (see section \ref{finitary-completeness}
  theorem \ref{preservation-of-bounds}).
\end{enumerate}

We will further abbreviate the ``Consistent Ultra-Finitist Logic''
proposed in here as ``UFL'' when speaking about higher order variant
(with dependent \(\Pi\),\(\Sigma\) types for quantification).

\section{Syntax and inference}\label{syntax-and-inference}

Due to size bounds and clarity of this paper, we first introduce
propositional ultrafinitist logic, and then describe interpretation of
universal quantifier in a separate section.

\subsection{Bounds}\label{bounds}

We express bounds as polyvariate functions of the natural numbers,
called \emph{sizes}. These explicitly bound our proofs, depending on the
size of input terms. While subtraction within natural domain is
permitted, only positive results of computation are permitted. All bound
functions are increasing with respect to all arguments (monotonic).

The bounds\footnote{Using a bound on cost and depth of the term for each
  inference, we independently developed a very similar approach to that
  used for cost bounding in higher-order rewriting
  \autocite{higherOrderRewriting}.} will be standing on one of two
roles: as an upper bound on the proof complexity, and there we will use
symbol \(\alpha{}\) as a placeholder, or to state an upper bound on the
depth of the normal form of the proof indicated by the symbol
\(\beta{}\). That is because the number of constructors may sometimes
bound a recursive examination of the proof of a proposition.

Here \(\rho^{\rho}\) is an exponentation, and
\(\iter{v}{\rho_1}{\rho_2}{\rho_3}\) is an iterated composition of
function described by expression \(\rho_1\) with respect to an argument
variable \(v\); that iteration happens \(\rho_2\) times, and the
function is applied to initial argument \(\rho_3\). The
\(\rho_1 \subst{v}{\rho_2}\) describes substitution of bound variable
\(v\) by \(\rho_2\), inside expression \(\rho_1\).

Any total function \(f(...)\) over naturals has a bounds \(b(...)\)
function \(∀ x ∈ \Nat. x ≤ n \implies f(x) ≤ b(n)\): given a range
limited \footnote{This is not true for traditional real numbers \(\R\):
  hyperbola \(y=\frac{1}{x}\) is unbounded around \(0\) because
  \(\lim\limits_{x→0}\frac{1}{x}=-\infty\).} by \(n\), we can compute
\(b(n)\) that is \(\mathop{max}\limits_{x≤n}f(x)\).

\begin{description}
\tightlist
\item[Conjecture Bounds terminate]
Since all iterations in bounds quantification terminate, all bounds
terminate.
\end{description}

\subsection{Terms}\label{terms}

All terms are explicitly limited, but we avoid labelling terms for which
bounds can be easily inferred (see below).

\[ \begin{array}{lrcl}
\text{Size variables:}     & v      & \in & V \\
\text{Size values}         & n      & = & \bOne \alt \bSucc{n} \\
\text{Term variables:}     & x      & \in & X \\
\text{Positive naturals:}  & i      & \in & \N \setminus \lbrace{}0\rbrace{} \\
\text{Upper bounds:}       & \rho   & ::= & v \alt  i \alt \bSucc{\rho} \alt \\
                           &        & \alt & \iter{v}{\rho}{\rho}{\rho} \alt \rho \subst{v}{\rho} \alt \max{\rho}{\rho} \\
\end{array} \]

Iteration is defined as:

\[ \begin{array}{lcl}
\iter{v}{e}{\bOne}{a}               & = & e\subst{v}{a} \\
\iter{v}{e}{\bSucc{n}}{a}           & = & \iter{v}{e}{n}{e\subst{v}{a}} \\
\end{array} \]

Later we will explain how bounds expressions can be encoded in the same
language as the proof terms. At the level of basic logic we do not need
this, but it will become useful when we consider meta-reasoning (and
encode entirety of the logic within its own proof terms.)

\[ \begin{array}{lrcl}
\text{Data size bounds:}   & \alpha & ::= & \rho \\
\text{Computation bounds:} & \beta  & ::= & \rho \\
\text{Types:}              & \tau   & ::= & v \alt \tau \land \tau \alt \tau \lor \tau \alt \tau_v \rightarrow^{\alpha}_{\beta} \tau \alt \bot \alt \unitType \alt \Type \\
\text{Terms:}              & E      & ::= & x \alt \lambda x.  E \alt \inr{E} \alt \inl{E} \alt \pair{E}{E} \alt \unitValue \\
                           &        & \alt & \begin{caseof}{E} \case{\inl{x}}{E} \case{\inr{x}}{E} \end{caseof} \\
\text{Environments:}       & \Gamma & ::= & {v_1 : \tau^1}_{\beta_1}, ..., {v_n : \tau^n}_{\beta_n} \\
\text{Judgements:}         & J      & ::= & \sequent{\Gamma}{\alpha{}}{\beta}{E:\tau} \\
\end{array} \]

There is a special expression \(\Type\) which is syntactically in
\(\tau\). It is later used when introducing dependent types, since
\(\Type\) lives both as a type and as a term. Translation between proofs
and types is later described in table \ref{encoding-types}.

\begin{table}
The addition, multiplication, and exponentiation can be defined on bounds using $\bSucc{}$
and $\iter{x}{\rho_1}{\rho_2}{\rho_2}$:

$$ \begin{array}{lcll}
\hyper{a}{b}{\bOne}                 & \equiv & \iter{x}{\bSucc{x}}{a}{b}           \\
\hyper{a}{b}{\bSucc{\bOne}}         & \equiv & \iter{y}{\iter{x}{\bSucc{x}}{y}{b}}{\bOne}{\bOne} \\
                                    &   & \hspace{40pt}\textit{Argument $y$ is ignored in this special case.}\\
\hyper{a}{b}{\bSucc{\bSucc{n}}}             & = & \iter{x}{\hyper{x}{a}{n}}{b}{n}     \\
\mathop{h}                          & = & λ g. λ a. λ b. \begin{caseof}{b} \case{\bOne}{a} \case{\bSucc{c}}{\iter{x}{f x a}{c}{a}} \end{caseof} \\
\hyper{a}{b}{n}                     & = & \iter{f}{h(f)}{n}{λ g. g} \\
a + b                               & = & \iter{x}{\bSucc{x}}{a}{b}                \\
a * \bOne                           & = & a \\
a * \bSucc{b}                       & = & \iter{x}{x+a}{b}{a} \\
a ^{1}                              & = & a                   \\
a ^{\bSucc{b}}                      & = & \iter{x}{x*a}{b}{a} \\
a [\bSucc{n}] b                     & = & \begin{caseof}{b} \case{\bOne}{a} \case{\bSucc{c}}{\iter{x}{x[n]b}{c}{a}} \end{caseof} \\
a + b                               & \equiv & \hyper{a}{b}{\bOne}                 \\
a * b                               & \equiv & \hyper{a}{b}{\bSucc{\bOne}}         \\
a ^ b                               & \equiv & \hyper{a}{b}{\bSucc{\bSucc{\bOne}}} \\
\bSucc{\bSucc{\bSucc{\ack{m}{\bSucc{\bSucc{\bSucc{n}}}}}}}  & \equiv & 2[m](\bSucc{\bSucc{\bSucc{n}}}) \\
\end{array} $$

Here $=$ is for definition, and $\equiv$ states equivalence of expressions. To avoid definition oof predecessor, we use equivalence
to express Ackermann function.

$\hyper{a}{b}{n}$, and $a[n]c$ are alternative ways of introducing hyperoperations.

We use hyperoperations for clarity, showing that we can indeed express Ackermann function as bounded iteration of function compositions.

\vspace{10pt}

\caption[Encoding arithmetic]{Encoding arithmetic operations, hyperoperation, and Ackermann function}
\label{encoding-arithmetic}

\end{table}

Notation \(A_{v} \rightarrow^{\alpha{(v)}}_{\beta{(v)}} B\) binds proof
variable \(x\) with type of \(A\) and size variable \(v\), and then
bound in bounds \(\alpha(v)\) for complexity and \(\beta(v)\) for
\emph{depth} of the \emph{normalized} term. We use notation
\(\alpha(v)\) instead of \(\alpha\) to emphasize that both \(\alpha(v)\)
and \(\beta(v)\) are functions of size variable \(v\).

We could attach a pair of bounds to each proposition and judgement
\(A_{(\alpha{}, \beta{})}\) that would describe both complexity
\(\alpha\) of computing the proof and a maximum depth \(\beta{}\) of the
resulting (normalized) witness. However, in most cases, one of these
would be \(1\) or could be inferred from the remaining information.

The \(\occurs{x}{E}\) is a count of free occurrences of variable \(x\)
in term \(E\). Free variables of \(E\) are computed by \(\free(E)\).

\subsection{Inference rules}\label{inference-rules}

With any term variable \(x\) we need to introduce an associated bound
variable \(v\).

\[
\Infer[var]{\sequent{\Gamma}{?}{?}{\isType{A}} & x \in X & v \in V}{\sequent{\Gamma, x_{v} : A}{1}{v}{x : A}}
\]

Sometimes we might want to overestimate proof complexity for the sake of
simplicity: \[
\Infer[subsume]{\sequent{\Gamma}{\alpha_1}{\beta_1}{{e : A}} & \quad \alpha_1 \le \alpha_2 & \quad \beta_1 \le \beta_2}{\sequent{\Gamma}{\alpha_2}{\beta_2}{{\textit{subsume}(e, \alpha_2, \beta_2)}: A}}
\]

Note that subsumption is necessary for \(\mathop{case}\)-expressions.
Below we have typical rules for construction and destruction of basic
types:

\[
\Infer[unit]{}{\sequent{\Gamma}{1}{\beta}{\unitValue : \unitType}}
\]

\[
\Infer[inl]{\sequent{\Gamma}{\alpha}{\beta}{e : {A}}}{\sequent{\Gamma}{\alpha+1}{\beta+1}{\inl{e} : {A \lor B}}}
\rulesep
\Infer[inr]{\sequent{\Gamma}{\alpha}{\beta}{e : {B}}}{\sequent{\Gamma}{\alpha+1}{\beta+1}{\inr{e} : {A \lor B}}}
\]

\[
\Infer[case]{\sequent{\Gamma}{\alpha_{\lor}}{\beta_{\lor}+1}{a : {L \lor R}}
            & \quad \sequent{\Gamma, x_{\beta_{\lor}} : {L}}{\alpha_l}{\beta_l}{l : B}
            & \quad \sequent{\Gamma, y_{\beta_{\lor}} : {R}}{\alpha_r}{\beta_r}{r : B}}
            {\sequent{\Gamma}{\alpha_{\lor}+\max{\alpha_l}{\alpha_r}+1}{\max{\beta_l}{\beta_r}}{\begin{caseof}{a}\case{\inl{x}}{l} \case{\inr{y}}{r} \end{caseof} : {B}}}
\]

\[
\Infer[pair]{\sequent{\Gamma}{\alpha_a}{\beta_a}{{a} : {A}} & \sequent{\Gamma}{\alpha_b}{\beta_b}{{b} : B}}{\sequent{\Gamma}{\alpha_a+\alpha_b+1}{\max{\beta_a}{\beta_b}+1}{(a, b)} : {A \land B}}
\]

\[
\Infer[prl]{\sequent{\Gamma}{\alpha}{\beta+1}{e : {A \land B}} & i \in \{l, r\}}{\sequent{\Gamma}{\alpha+1}{\beta}{\prl{e }: {A}} }
\rulesep
\Infer[prr]{\sequent{\Gamma}{\alpha}{\beta+1}{e : {A \land B}} & i \in \{l, r\}}{\sequent{\Gamma}{\alpha+1}{\beta}{\prr{e }: {B}} }
\]

Please note that notation \(A_v \rightarrow^{\alpha}_{\beta} B\) has a
size variable \(v\) declared as a depth of \emph{normal form proof term}
having type \(A\), and then bounds \(\alpha\) and \(\beta\) apply to the
computation of the result.

\[
\Infer[abs]{\sequent{\Gamma, x_v : A}{\alpha}{\beta}{e : {B}} & x, v \not\in Γ}{\sequent{\Gamma}{\alpha\subst{v}{1}+1}{\beta\subst{v}{1}+1}{\lambda x. e : {A_v {\rightarrow}^{\alpha}_{\beta} B}}}
\]

Note that abstraction increases term depth by one, and application
decreases it by one\footnote{This allows us to correctly treat Church
  encoding.}. All introduction rules (\textit{abs}, \textit{pair},
\textit{inl}, \textit{inr}) increase \(\beta\) by at least
one\footnote{For \textit{unit}, the inner proof term would have null
  depth, since there is no term there. Thus depth is \(0+1\) instead of
  \(\beta+1\).}. Likewise all \emph{non-functional} (data) elimination
rules (\textit{case}, \textit{prl}, \textit{prr}) decrease depth
expected from the resulting normal form \(\beta\) by one.

\[
\Infer[app]{\sequent{\Gamma}{\alpha_1}{\beta_1}{e : {A_v {\rightarrow}^{\alpha_2}_{\beta_2} B}}
           &\sequent{\Gamma}{\alpha_3}{\beta_3}{a : A}}
           {\sequent{\Gamma}{\alpha_1+\alpha_2\subst{v}{\beta_3}+\alpha_3}{\beta_2\subst{v}{\beta_3}}{e \; a : B}{} }
\]

Please note that these rules all maintain bounded depth with no
unbounded recursion. We add an explicit bounded recursive definition
(like the definition of the closure) with this rule:

\[
\scalebox{1.27}{
\Infer[rec]{\sequent{\Gamma}{\alpha_1}{\beta_1}{{f : A_v {\rightarrow}^{\alpha_2}_{\beta_2} A}}
           &\sequent{\Gamma}{\alpha_3}{\beta_3}{k : B}
           &\sequent{\Gamma}{\alpha_4}{\beta_4}{a : A}
            }
           {\sequent{\Gamma}
                    {\alpha_1+\iter{v}{\alpha_2}{\beta_3}{\beta_4}+\alpha_3+\alpha_4}
                    {\beta_1\subst{v}{\iter{v}{\beta_2}{\beta_3}{\beta_4}}}
                    {{\rec{f}{k}{a}} : B}{}}
}
\]

Here the depth of the term must decrease at each step of the recursion.
With the exception of \emph{subsume}, and \emph{rec} these are all
reinterpretations of rules for intuitionistic logic
\autocite{intuitionism,dalen,curryhoward}, enriched with bounds on the
proof length \(\alpha\) and normalized depth of term \(t\) namely
\(\depth{t}\) as depth expression \(\beta\). The rule \emph{rec} allows
for explicitly \emph{bounded recursion}, as opposed to traditional
approaches that rely on an unbounded fixpoint\footnote{Fixpoint may not
  exist, thus leading not only to arbitrarily long computation, but also
  to undecidability in cases where computation may never end.}.

Note that we may quantify on higher order values, but we cannot recurse
indefinitely: there is always a limit to a number of function
compositions allowed. Power of bounded function composition gives an
explicit limit to Peano Arithmetic
induction\autocite{goodstein-unprovable-in-peano}: any computational
application of Peano induction is unbounded. At the same time, we can
use multiple recursions over bounded number of functions, terms, not
just natural numbers. Wired-in explicit bounding also allows us to prove
termination of arbitrary ``towers'' of function compositions, like
hyperoperations
\autocite{bennet-hyperoperations,hilbert-hyperoperations,ackermann-study,ackermann-superpowers}\footnote{See
  table \ref{encoding-arithmetic} to see how hyperoperations and
  Ackermann function can be encoded using \textit{iter} for bounded
  iteration of function composition.}, including Goodstein functions
that cannot be proven \emph{within} PA itself\autocite{goodstein}.

\subsubsection{Implicit universal
quantification}\label{implicit-universal-quantification}

In the propositional logic above, provability allows us to confirm
statements with \(\forall\) for all variables on top. Given that
statement of existence of bounded proof term \(x\) for witness bounded
by result size \(v\) can be interpreted in the following way in unbound
logic \(∃v∈\N^{+}. ∃ x. \depth{x}≤v\) .

So \(a_v \rightarrow^{\alpha}_{\beta} b\) becomes the following
statement in unbounded logics
\(∀v.∀a.(\depth{a}<v) \rightarrow a \rightarrow b \land \depth{b} < β ∧ c(b) < α\).
That is: we can infer that fact for all \(a\) below an arbitrarily large
depth, and bound the depth and computational complexity of the resulting
witness.

This concludes our treatment of Ultrafinitist Propositional Logic
(UFPL).

\subsubsection{Quantification with dependent
types}\label{quantification-with-dependent-types}

It is customary in constructive mathematics and theorem proving to use
dependent types instead of usual universal and existential quantifiers
\autocite{dependent-types}.

Please note that just like one can define intuitionistic propositional
logic~with just implication and then encode both sum and product
types\autocite{church-encoding}, so \(\Pi\) type can express both
universal quantification and plain implication, while \(\Sigma\) type
can express both existential quantification and product type. Since
implication can already express sum and product types in polymorphic
calculus, we will only show how to modify rules for implication and
lambda to make the \(\Pi\) type that corresponds to universal
quantification.

While it is usual to introduce universal quantification directly in
calculi without proof terms, we will introduce them with \(\Pi\) types,
like is now customary in dependently typed languages.

First we need a rule to introduce a type variable:

\[
\Infer[tyvar]{\sequent{Γ}{\alpha}{\beta}{\isType{t}}}{\sequent{\Gamma, v < β, x_v : t}{1}{v}{x : t}}
\]

This rule allows us to use variables at type level, and together with
\(\Pi\) and \(\Sigma\) types allow to express quantification.

For the inequalities, it suffices to ensure that they are not cyclic and
thus unsatisfiable. Note that inequality stems from the fact that value
is always no longer than its encoding as a type.

\[
\Infer[forall-form]{\sequent{Γ}{\alpha_1}{\beta_1}{\isType{A}} & \sequent{Γ, x_v : A}{\alpha_2(v)}{\beta_2(v)}{\isType{B}} & x, v \not∈Γ }{\sequent{\Gamma}{\alpha_1+\alpha_2\subst{v}{1}}{\max{\beta_1}{\beta_2\subst{v}{1}}+1}{\isType{\depprod{x_v}{A}{\alpha_2}{\beta_2}{B}}}}
\]

Please note that similarly to the treatment of lambda abstraction as
proof of implication, we estimate the computational cost of dependent
product by substituting free variables with 1, but now we still need to
consider the same substitution in the resulting \(\Pi\) type.

Treatment of universal quantifier bears usual
similarity\autocite{dependent-types} to \emph{abs} and \emph{app} rules:

\[
\Infer[forall-intro]{\sequent{\Gamma, x_v : A}{\alpha}{\beta}{e : {B}}}{\sequent{\Gamma}{\alpha\subst{v}{1}+1}{\beta\subst{v}{1}+1}{\lambda(x : A). e : \depprod{x_v}{A}{\alpha}{\beta}{B}}}
\]

Elimination works the same way as for usual application, since
computation works after type erasure.

\[
\Infer[forall-app]{\sequent{\Gamma}{\alpha_1}{\beta_1}{e : {A_v {\rightarrow}^{\alpha_2}_{\beta_2} B}}
           &\sequent{\Gamma}{\alpha_3}{\beta_3}{a : A}}
           {\sequent{\Gamma}{\alpha_1+\alpha_2\subst{v}{\beta_3}+\alpha_3}{\beta_2\subst{v}{\beta_3}}{e \; a : B\subst{x}{A}}{} }
\]

This allows us to replace implication and by extension, all UFPL~types.
It also allows for quantification of higher order values.

We leave introducing \(\Sigma\)-types to interested students of type
theory, since they are not essential to our argument that we may have a
decidable higher order logic.

Since introduction creates the proof term in the same way, the proof
terms can be enumerated in the same way as shown in section
\ref{decidability} on page \pageref{decidability}.

\section{Application of the logic}\label{application-of-the-logic}

\subsection{Using proofs}\label{using-proofs}

Each proof ultimately leads to a judgment
\(\sequent{\Gamma}{\alpha}{\beta}{e : A}\). We may resolve all upper
bound variables \(v_1, v_2, ..., v_n\) in the \(\alpha\) to get an upper
bound on computational complexity of the statement, and in \(\beta\) to
get an upper bound on normalized term resulting from the proof. This way
all proofs are ultra-finitary statements: Only as long as \(\alpha\) is
less than our assumed limit, we will consider the proof valid and proof
computation to be available within the given time.

\subsection{Simplifying upper bounds}\label{simplifying-upper-bounds}

Our inference rules rely on computing upper bounds and their inequality.
Here we note a few inequalities that simplify reasoning about these
bounds, albeit at the cost of making them somewhat looser.

First, we note that all variables are positive naturals because they
represent the data of non-zero size: \(x \geq 1\).

That means that the following laws are true, assuming that
\(x, y, ... \geq 1\) are data size variables in the environment,
\(1 \leq e \leq f\) and \(1 \leq g \leq h\) are arbitrary positive
expressions, and \(a, b, c...\geq 1\) are constants. For easier use, the
rules are presented in left-to-right order, just like conventional
rewrite rules.

\begin{table}

\[\begin{array}{lrcl}
(1) & a*x^e + b*x^f          & \leq & (a+b)*x^f      \\
(2) & a*x^e*y^g              & \leq & a*x^f*y^h      \\
(3) & \iter{v}{e}{g}{x}      & \leq & \iter{w}{f}{h}{x} \\
(4) & \iter{v}{v*a}{e}{x}    & =    & a^e*x          \\
(5) & \iter{v}{v+a}{e}{x}    & =    & x+a*e          \\
(6) & \iter{v}{v^e}{g}{x}    & =    & x^{e^g}        \\
\end{array}\]

Assumptions:

\begin{itemize}
\tightlist
\item
  \(x, y \geq 1\) are data size variables in the environment,
\item
  \(1 \leq e \leq f\) and \(1 \leq g \leq h\) are arbitrary positive and
  increasing expressions,
\item
  \(a, b, c...\geq 1\) are constants.
\end{itemize}

\vspace{8pt}

\caption{Simplication of bounds. May rewrite left to right.}
\label{simplification-of-bounds}

\end{table}

We may thus use these rules to loosen the bound in such a way as to
reduce the size of the bound expression and make it a sum of a single
term in all variables and an additional constant term. This reduction
may be delayed until we have bound to verify.

We may use inference rules leaving ``type holes''\autocite{type-holes}
instead of bounds, which could be named ``bound holes'', and let them be
filled by the framework interpreter.

\subsection{Reduction}\label{reduction}

Reduction relation is defined as small step semantics
\autocite{structural-operational-semantics} in order to preserve number
of computational steps made over the course of evaluation. See table
\ref{table:reduction} on page \pageref{table:reduction}.

\begin{table}[!t]

$$
\Infer[eval-case-arg]{\eval{k}{e}{e'}}{\eval{k}{\begin{caseof}{e}\case{\inl{x}}{b} \case{\inr{y}}{c} \end{caseof}}{\begin{caseof}{e'}\case{\inl{x}}{b} \case{\inr{y}}{c} \end{caseof}}}
$$

$$
\Infer[eval-case-left]{k=\occurs{x}{b}}{\eval{k}{\begin{caseof}{\inl{a}}\case{\inl{x}}{b} \case{\inr{y}}{c} \end{caseof}}{b \subst{x}{a}}}
$$

$$
\Infer[eval-case-right]{k=\occurs{y}{c}}{\eval{k}{\begin{caseof}{\inr{a}}\case{\inl{x}}{b} \case{\inr{y}}{c} \end{caseof}}{c \subst{y}{a}}}
$$

$$
\Infer[eval-app]{k=\occurs{x}{e}}{\eval{k}{(\lambda x. e) f}{e \subst{x}{f}}}
$$

$$
\Infer[eval-sum]{\eval{n}{e}{e'} & i \in \{l, r\}}{\eval{n}{\mathop{in_i}(e)}{\mathop{in_i}(e')}}
$$

$$
\Infer[eval-prl]{}{\eval{1}{\prl{\pair{a}{b}}}{a}}
\rulesep
\Infer[eval-prr]{}{\eval{1}{\prr{\pair{a}{b}}}{b}}
$$

$$
\Infer[eval-pairleft]{\eval{n}{a_l}{a'_l}}{\eval{n}{\pair{a_l}{a_r}}{\pair{a'_l}{a_r}}}
$$

$$
\Infer[eval-pairright]{\eval{n}{a_r}{a'_r}}{\eval{n}{\pair{a_l}{a_r}}{\pair{a_l}{a'_r}}}
$$

\caption{Reduction rules}

\label{table:reduction}

\end{table}

When performing application, we expect substitution to take work
proportional to the number of occurrences of the variable, like changing
links on directed acyclic graph of the term.

For discussion of efficient reduction of lambda terms please read
\autocite{optimal-reduction,optimality-linear-reduction}, since here we
focus on demonstration with a simplified cost model.

\subsection{Self-encoding}\label{self-encoding}

\subsubsection{Natural numbers}\label{natural-numbers}

\begin{table}[!t]

\[
\begin{array}{rcllll}
\Nat_{\beta} & = & \splice{\rec{x}{\interp{\unitType \lor x}}{\beta}{\interp{\unitType}}} \\
\Zero   & = & \inl{\unitValue} & :^{1}_{1} & \Nat_1 \\
\aSucc  & = & \lambda x_v \rightarrow^{1}_{v+1} \inr{x} & :^{1}_{v+1} & \Nat_v \rightarrow Nat_{v+1} \\
\end{array}
\]

\caption{Encoding natural numbers}
\label{encoding-naturals}

\end{table}

In this section we will encode bounds, propositions (types) and proof
terms as proof terms within UFPL. Thus \(\interp{..}\) corresponds to
LISP quote.

Below we use notation \(\dbi{v}\) for de Brujin index of the variable
\autocite{de-brujin-index}.

\subsubsection{Encoding bounds}\label{encoding-bounds}

Now we may encode bounds (table \ref{encoding-bounds}), types (table
\ref{encoding-types}), and proof terms (table \ref{encoding-terms}).

\begin{table}

\[ \begin{array}{lcl}
\Var_{\beta}                       & = & \Nat_{\beta} \\
\Bound_{\beta+1}                   & = & \Var \lor \Nat_{\beta} \lor \unitType \lor \pair{\Bound_\beta}{\Bound_\beta} \\
                                   & \lor & \pair{\Bound_\beta}{\Bound_\beta} \\
                                   & \lor & \pair{\Bound_\beta}{\Bound_\beta} \\
                                   & \lor & \pair{\Bound_\beta}{\pair{\Bound_\beta}{\Var}} \\
                                   & \lor & \pair{\Bound_\beta}{\pair{\Var}{\Bound_{\beta}}}\\
\interp{v}                         & = & \inl{\inl{\inl{\dbi{v}}}} \\
\interp{i}                         & = & \inl{\inl{\inr{i}}} \\
\interp{\unitValue}                & = & \inl{\inl{\inr{\unitValue}}} \\
\interp{\rho_1 + \rho_2}           & = & \inl{\inr{\inr{(\interp{\rho_1}, \interp{\rho_2})}}} \\
\interp{\rho_1 * \rho_2}           & = & \inr{\inl{\inl{(\interp{\rho_1}, \interp{\rho_2})}}} \\
\interp{\rho_1^{\rho_2}}           & = & \inr{\inl{\inr{(\interp{\rho_1}, \interp{\rho_2})}}} \\
\interp{\iter{v}{\rho_1}{\rho_2}{\rho_3}}   & = & \inr{\inr{\inl{(\dbi{v},\interp{\rho_1}, (\interp{\rho_2}, \interp{\rho_3}))}}} \\
\interp{\rho_1 \; \subst{v}{\rho}} & = & \inr{\inr{\inr{(\interp{\rho_1},(\interp{\rho_2}, \dbi{v}))}}} \\
\end{array}
\]

\caption{Encoding bounds}
\label{encoding-bounds}

\end{table}

\begin{table}

\[
\begin{array}{lcl}
\interp{A \lor  B}           & = & \inl{\inl{\pair{\interp{A}}{\interp{B}}}} \\
\interp{A \land B}           & = & \inl{\inr{\pair{\interp{A}}{\interp{B}}}} \\
\interp{A_v \rightarrow^{\alpha}_{\beta}  B} & = &
\inr{\inl{(\lambda x : A . \interp{B}, (\lambda v:\Nat_v. \interp{\alpha}, \lambda v:\Nat_v. \interp{\beta}))}}          \\
\interp{\unitType}           & = & \inr{\inr{\unitValue}}          \\
\end{array}
\]

\caption{Encoding types}
\label{encoding-types}

\end{table}

\begin{table}

\[
\begin{array}{lcl}
\interp{x_v}                   & = & \inl{\inl{\inl{\inl{\pair{{\dbi{x}}}{v}}}}} \\
\interp{\mathop{subsume}(A,B)} & = & \inl{\inl{\inl{\inr{\pair{\interp{B}_{\Bound}}{\interp{A}}}}}} \\
\interp{\mathop{unit}}         & = & \inl{\inl{\inr{\inl{\unitValue}}}} \\
\interp{\inl{A}}               & = & \inl{\inl{\inr{\inr{{A}}}}} \\
\interp{\inr{A}}               & = & \inl{\inr{\inl{\inl{{A}}}}} \\
\interp{\prl{A}}               & = & \inl{\inr{\inl{\inr{{A}}}}} \\
\interp{\prr{A}}               & = & \inl{\inr{\inr{\inl{{A}}}}} \\
\interp{\pair{A}{B}}           & = & \inl{\inr{\inr{\inr{\pair{\interp{A}}{\interp{B}}}}}} \\
\interp{A B}                   & = & \inr{\inl{\inl{\inl{\pair{\interp{A}}{\interp{B}}}}}} \\
\interp{\lambda x_v. A}        & = & \inr{\inl{\inl{\inr{\pair{\pair{\dbi{x}}{\dbi{v}}}{\interp{A}}}}}} \\
\interp{\rec{v}{A}{B}{C}}      & = & \inr{\inl{\inr{\inl{\pair{\pair{\dbi{v}}{\interp{A}}}{\pair{\interp{B}}{\interp{C}}}}}}} \\
\end{array}
\]

\caption{Encoding terms}
\label{encoding-terms}

\end{table}

This encoding allows us to make operations on types akin to generic
programming in Haskell \autocite{generic}.

Our inference rules rely on computing bounds and their inequalities.
Given that all variables are positive naturals because they represent
the data of non-zero size: \(x \geq 1\), we may simplify these bounds
with a set of simple inequalities.

\subsubsection{Encoding proof terms}\label{encoding-proof-terms}

Note that every type term in \emph{normal form} is longer than its own
type.

\begin{theorem}[Encoding]

All bound, type, proof, or proposition of UFPL can be encoded as a proof
term of UFPL.\end{theorem}

Details are visible in the tables
\ref{encoding-naturals}-\ref{encoding-terms}.

\section{Properties of the logic}\label{properties}

When implementing the computation seems straightforward, we will just
establish the finite limit for the computation that should be taken as a
proof. That is what we describe as \emph{problem is decidable by the
limit of} a given complexity. This approach explicitly describes
undecidable problems as those that require an infinite number of steps
to solve.

\subsection{Consistency}\label{consistency}

Here we will only use well-known proof of consistency of intuitionistic
logic \autocite{intuitionism,dalen,curryhoward}\footnote{The proof above
  is totally independent of previous conjectures.}. We do not use the
self-encoding presented in section \ref{self-encoding}.

\begin{theorem}[Consistency of UFL]

UFPL is consistent, if intuitionistic propositional logic is
consistent.\end{theorem}

\begin{proof}

After elision of
bounds\footnote{Elision of bounds is only used once to prove consistency.},
we interpret the rule \emph{subsume} as \(\mathop{id}=\lambda x.x\).
Then we see the standard proof rules for intuitionistic logic. The
consistency follows from the consistency of intuitionistic
logic.\end{proof}

\subsection{Expressivity}\label{expressivity-prf}

\begin{theorem}[At least as expressive as PRA.]

UFL can express all Primitive Recursive programs.\end{theorem}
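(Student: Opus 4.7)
The plan is to proceed by induction on the inductive construction of primitive recursive (PR) functions, exhibiting for each such function a corresponding CUFL term together with bounds $\alpha$ and $\beta$ that certify its cost and output depth. The five PR schemata---zero, successor, projection, composition, and primitive recursion---will each be translated into a CUFL construction, with the bounds built up in parallel.

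First I would dispatch the base cases. Zero and successor are already given in the natural-number encoding above. For the projection $\pi^n_i : \Nat^n \to \Nat$, I would represent an $n$-ary input as a right-nested pair and then iterate the \emph{prj} rules; each projection contributes only constant cost.

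Next, composition $f(\vec{x}) = g(h_1(\vec{x}), \dots, h_k(\vec{x}))$ is obtained by pairing the values $h_i(\vec{x})$ and then applying $g$ to the resulting tuple via the \emph{pair}, \emph{abs}, and \emph{app} rules. The composed bound is the sum of the $\alpha_{h_i}$ together with $\alpha_g$ after substituting the depth bounds $\beta_{h_i}$ for the corresponding size variables; this remains expressible in the bound language $\rho$.

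The key step is primitive recursion itself. Given already-encoded $g$ and $h$, the function $f$ defined by $f(0, \vec{y}) = g(\vec{y})$ and $f(n+1, \vec{y}) = h(n, f(n, \vec{y}), \vec{y})$ is obtained by instantiating the \emph{rec} rule. The iterator $\iter{\alpha_h}{n}{v}$ already available in the bound syntax captures exactly the $n$-fold accumulated cost of $h$, while $\iter{\beta_h}{n}{v}$ bounds the depth of the result.

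The main obstacle will be the bookkeeping for the bounds. At each inductive step the claimed bound must be shown to be a genuine upper bound on the actual cost of the CUFL term and must remain in $\rho$ so that the construction can be iterated further. The recursion case is where most of the work lies, since the bound for $f$ is an iteration of $h$'s cost, which itself may contain nested iterators from earlier steps. Closure of $\rho$ under iteration, substitution, and maximum guarantees the required expressibility, and the simplification laws of Section~\ref{simplifying-upper-bounds} together with the \emph{subsume} rule allow the bounds to be kept in a manageable normal form; but formulating the inductive invariant precisely enough to push through, and verifying that the CUFL reduction relation indeed agrees with the standard PR unfolding, will require care.
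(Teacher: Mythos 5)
Your route is sound but genuinely different from the paper's. The paper does not argue by structural induction over the primitive recursive schemata at all: it simply observes that CUFL can emulate bounded \emph{loop} programs in the sense of Meyer and Ritchie, and then invokes the known equivalence of the loop language with the primitive recursive functions (Robinson), so the only CUFL-specific work is encoding a bounded loop via the \emph{rec} rule. Your proposal instead compiles each of the five PR schemata directly into CUFL terms and carries the cost and depth bounds through the induction. The paper's detour buys brevity --- the equivalence with PR is outsourced to the literature and no bound bookkeeping for composition is ever made explicit --- while your direct translation buys explicit, compositional bounds in the language $\rho$ for every PR function, which is closer in spirit to how the rest of the paper uses its bound syntax. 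One concrete point you should make explicit if you carry this out: the \emph{rec} rule is a bounded \emph{iterator}, not a full recursor --- the step function $f : A_v \rightarrow^{\alpha_2}_{\beta_2} A$ never sees the iteration index or the parameters $\vec{y}$ --- so the schema $f(n+1,\vec{y}) = h(n, f(n,\vec{y}), \vec{y})$ must first be reduced to pure iteration on a product type (carrying the counter in a pair and fixing $\vec{y}$ by partial application), with the depth of the encoded numeral serving as the iteration count $\beta_3$; this is exactly the same trick one needs when emulating loop programs, so neither route escapes it, but your write-up currently says only ``instantiating the \emph{rec} rule,'' which glosses over it.
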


\begin{proof}

It is easy to show that our logic can emulate bounded loop
programs\autocite{loop-language} which has power equivalent to primitive
recursive functions\autocite{pra}. Every bounded \emph{loop} can be
encoded by \(\iter{v}{\textit{loop}}{x}{n}\), then every flat logical
statement can be encoded with a tuple containing states of the
variables.\end{proof}

One could muse that this class does not cover all Bounded Turing
Machine\autocite{time-bounded-turing} programs. In order to support
these, we would need to define more general bounding functions.

One can replace upper bound expressions with arbitrary bounding
functions expressed in simply typed lambda calculus (see section
\ref{self-encoding}). These are the operations used in inference rules.
However, such functions are more difficult to bound and compute
themselves.

It has been proven that any function whose complexity is \emph{bounded}
by primitive recursive function is also primitive
recursive\autocite{pra-bounding}, which means that estimating our
complexities could become an impossibly long endeavour, but logically
consistent one.

To give an example of simplified Ackermann function which is the best
known example of function beyond PRA \autocite{sudan,ackermann},
evaluation takes
\(\mathop{A}(5) = 2^{2^{2^{2^{16}}}} - 3\)\autocite{oeis-simplified-ackermann}.
That means that these evaluations quickly get out of hand and indeed
outside of any reasonable limits.

The encoding of Ackermann function is through hyperoperation in table
\ref{encoding-arithmetic}.

\subsection{Bounded Turing completeness}\label{turing-completeness}

An evidence of stronger expressivity may be found by encoding bounded
Turing Machine programs in UFPL. This proof uses encoding similar to
\ref{self-encoding}, but for a Turing Machine. For a reference on
encoding of Turing machines in lambda calculus see
\autocite{encoding-turing-machines}.

For any complexity bound \(f(x)\) expressible in the language of
ultrafinitist logic, and an algorithm that satisfies it and emulation
function with complexity of \(e(f)(x)\) -- that is an encoding \(e(f)\)
of \(f\), applied to the argument \(x\) of \(f\) -- which we can encode
this emulation as a bound.

\begin{theorem}[Emulation complexity] \label{emulation-complexity}

Assume a time complexity \(c(x)\) for program (or proof) \(s\) that can
be encoded as UFL bounds. If we can emulate (encode evaluation) of
\(f(x)\) with an overhead \(e\) for each step, then we can prove that
complexity of evaluating \(s\) is \(e*c(x)+cc(x)\).

Where \(cc(x)\) is complexity of evaluating complexity bounds for the
encoding \(c(x)\).\end{theorem}

\begin{proof}

Given each step of emulation encoded as \(s(x)\), where \(x\) is a
current state, emulation with a complexity function encoded as \(f(x)\)
can be executed by \(\iter{v}{s}{f(x)}{x}\).

Assuming that \(e(f)\) is function emulation in UFPL, we can write proof
expression \(\iter{x}{e(f)(x)}{e(c)(x)}{x}\). This expression evaluated
encoded \(s\) and has exactly the assumed complexity\end{proof}

The most complex part of the proof may be logically inferring the right
complexity \(c(x)\) and totality of the function \(f\) within this
number of steps.

\begin{theorem}[Bounded Turing Machine emulation]

For programs of Bounded Turing Machine \(f\) over alphabet size \(|a|\)
and number of states \(|s|\) with complexity that can be encoded in UFL,
we can prove time complexity of \(\lg_2(|a|)+\lg_2(|s|)*|c|+|cc|\) with
UFPL. Note that \(|cc|\) is cost to evaluate complexity function
itself.\end{theorem}

\begin{proof}

We use emulation argument for Bounded Turing machines that may be
limited by bounds described above, it is too with
\(\bigoh{\log^2(a)*f(x)}\), where \(a\) is the bound on the size of
alphabet and number of states of the machine.

For the Bounded Turing machine we encode tape as pair of lists, with
current position at the top of both lists.

Then we encode the following steps:

\begin{itemize}
\tightlist
\item
  examine the alphabet character: \(\bigoh{\log(|a|)}\)
\item
  examine finite state machine for a character: \(\bigoh{\log(|s|)}\)
\item
  move one step up or down the tape by moving the top from one line list
  to the other: \(\bigoh{1}\);
\item
  if we want to write at the current position, we take the top element
  from the right list, and put the new one.
\end{itemize}

Together they make a single step of the Turing machine at the cost of
\(\bigoh{\log(|a|) + \log(|s|)}\).\end{proof}

We encode variable bindings as a dictionary with cost of
\(\bigoh{\lg_2{|\Var|}}\), where \(\Var\) is number of variables used.
All operations not involving substitution should remain at \(\bigoh{1}\)
complexity within emulation.

\begin{lemma}[Self-emulation]

ULF self-emulation of function with integral bound \(|cc|\) is feasible
within \(\bigoh{|cc|*\lg_2{|\Var|}}\).\end{lemma}

Overall we can infer that for each algorithm of bounded complexity \(B\)
that we may encode in ULF, we may use ULF self-emulation to find a proof
with complexity of at most \(e*B*\lg_2{|\Var|}\). All steps of ULF are
\(\bigoh{1}\) with respect to inferred bounds on computational
complexity, with the exception of function application and variable
substitution which are \(\bigoh{\lg_2{|\Var|)}}\)

\begin{theorem}[Emulation completeness]

If the bounds that can be encoded within the bounds function, the UFPL
is complete for proving its own bounds up to the cost of self-emulation
\(e\).\end{theorem}

Since we can encode any statement in UFPL in UFPL itself, this likely
would mean the proof of emulation completeness can be written in the
UFPL itself.

There are complex ways of proving completeness that apply in the realm
of non-idempotent intersection type systems, but they use a more
abstract notion of complexity\autocite{tight-bounds}.

\subsection{Decidability of bounded statements}\label{decidability}

\begin{theorem}[Decidability]

Every valid proposition with a \emph{fixed bound on input} \(n\) can be
checked by enumerating inputs, and is thus decidable.\end{theorem}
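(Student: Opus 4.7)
The plan is to reduce the decidability question to a finite enumeration, exploiting the fact that once the input size variable is pinned to a fixed $n$, every bound appearing in the derivation becomes a concrete natural number. First I would substitute $n$ for every free size variable appearing in the judgement $\sequent{\Gamma}{\alpha}{\beta}{e : A}$ (and throughout $\Gamma$), then apply the simplification rules of Section~\ref{simplifying-upper-bounds} to collapse $\alpha$ and $\beta$ into closed arithmetic expressions $\alpha(n)$ and $\beta(n)$. This is routine bound arithmetic, but has to be done uniformly across the whole context so that no residual free size variable survives.

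Next I would show that the set of well-typed terms of any fixed type and depth at most $\beta(n)$ is finite. Since the term grammar of Section~\ref{syntax-and-inference} has only finitely many constructors and type formers, an induction on $\beta(n)$ bounds the number of distinct well-typed terms at each depth by a tower in $\beta(n)$. The same argument yields a finite space of candidate inputs of size at most $n$, which is the object to be enumerated.

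For each input in this finite space, I would invoke the deterministic small-step rules \emph{eval-*} of Section~\ref{reduction}. The derivation of the judgement guarantees that evaluation halts within $\alpha(n)$ steps, and at each step the applicable rule is mechanically determined by the head constructor, so the procedure is effective. Checking whether the resulting normal form witnesses $A$ is a syntactic match, hence decidable, and aggregating the answer over the finite input set yields the overall decision procedure.

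The hard part will be verifying that the bound $\alpha(n)$ genuinely caps the full reduction trace, with no hidden blow-up in \emph{app} and \emph{rec}, where the substitutions $\alpha_2 \subst{v}{\beta_3}$ and $\iter{\alpha_2}{\beta_3}{v}\subst{v}{\beta_4}$ mix argument sizes into the step count. I would prove this as a subject reduction and termination lemma, showing that each reduction step strictly decreases a measure bounded above by $\alpha(n)$, using the bound simplifications to handle the iterated and substituted forms uniformly. Once that invariant is in place, the enumerate-and-evaluate procedure is evidently total and decides the proposition.
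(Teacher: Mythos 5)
Your proposal takes essentially the same route as the paper: the paper's entire justification is the one-line remark that one enumerates the finitely many inputs of depth at most $n$ (roughly $a^n$ of them) and checks each within the bound $\alpha(n)$, for total cost $\alpha(n)*a^n$, which is exactly your enumerate-and-evaluate scheme. The bound-soundness lemma you flag as the hard part (that $\alpha(n)$ genuinely caps the reduction trace, including the \emph{app} and \emph{rec} substitutions) is left implicit in the paper as well, so your version is, if anything, more explicit about what still needs to be verified.
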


This comes at the cost of complexity that increases by
\(\alpha(n)*a^n\), where \(n\) is the depth of input, since we need to
enumerate all inputs of depth \(n\). Proof follows directly from
enumeration, and bounds.

\begin{proof}

Let's try to enumerate the terms that can be constructed for a given
bounds, without using \emph{subsume} rule:

Bounds function will contain a given number \(n\) of successor
functions, and \(\mathop{iter}\) expressions. Each time we make a single
inference rule, and construct a slightly more complex term, we add a
successor function or \(\mathop{iter}\) expression, the number of
different proof tree shapes equals to the number of ternary trees
constructible with \(n\) nodes. This number is defined as OEIS A001764
\autocite{oeis-ternary-trees} and given by the algebraic term
\(\frac{\binom{3n}{n}}{2n+1}\). Given that for each of \(n\) nodes we
can choose one of the 12 rules (\(12^n\) different choices), we have at
most \(\bigoh{\frac{12^n\dbinom{3n}{n}}{2n+1}}\) different proofs with
the complexity given by a term of \(n\) nodes.

With the inclusion of subsumption rule, we can only decrease the \(n\),
so at most
\(\sum\limits_{i=1..n}\bigoh{\frac{12^n}{2n+1}\dbinom{3n}{n}}=\bigoh{12^n\dbinom{3n}{n}}\).

Since number of proofs is finite, we can decide the provability after
they are exhausted.\end{proof}

Thus all judgements with bounds are decidable. This property is shared
with some other \emph{resource bounded logics} \autocite{tight-bounds}.

Given that exponential lower bound has been established for
implicational intuitionistic logic \autocite{exponential-lower-bound},
we expect that lower bound for ultrafinitist logic will also be
exponential and thus the proposed bound is asymptotically tight.

\subsection{Paradox of undecidability}\label{paradox-of-undecidability}

Expressing any statements about undecidability implicitly requires
unbounded computational effort. Since all our proofs and arguments are
explicitly bounded, there is no room to state undecidability. Thus we
conclude that this paradox is removed from ultrafinitist logic:
statement of undecidability is \emph{invalid} as a proposition. All
\emph{valid propositions} are decidable.

This is not as outrageous as it superficially seems, since we already
know that computation models that would allow transfinite number of
steps would also make all functions computable \autocite{hamkins}.

\subsection{Finitary completeness}\label{finitary-completeness}

Let's assume we have upper bounds on all variables within an
intuitionistic theorem.

Can we prove it with UFL?

\begin{theorem}[Preservation of bounded intuitionistic theorems]
\label{preservation-of-bounds}

Any intuitionistic theorem bounded by definite integers in UFL can be
proven in UFL.\end{theorem}

\begin{proof}

Let's enumerate complexities of computing intuitionistic proof for a
given set of inputs bounded by given value. We may enumerate these
proofs, and thus take maximum length of the computation. This maximum
length will be upper bound on all proofs.\end{proof}

This proof uses \ref{decidability}, and \ref{self-encoding}. Naturally
this means that all statements with bounds but proof without bounds will
also have proof with bounds.

\section{Related work}\label{related-work}

The philosophical problem with transfinite arguments has been spotted
long before
\autocite{explicit-finitism,podnieks-finitism,yessenin-criticism,gefter,lenchner}.
Automatic theorem provers like Coq require a monotonically decreasing
bounding function in the ordinal domain for each inductive definition
\autocite{well-founded-recursion,coq-recursion,coq-partial-recursion}.
This makes all recursive definitions
\emph{well-founded}\autocite{well-founded-recursion}, but since
transfinite ordinals are permitted, it also allows theories outside
computable universe.

The computation of a bounding function may turn out to take unfeasible
amount of time. Cost calculi for functional languages attempt to assign
cost to certain operations in order to reason about time and space
complexity \autocite{cost-calculi}. But these approaches do not require
all proofs and propositions to carry the cost as we do.

Philosophers have postulated distinction between feasible computations
and unfeasible ones \autocite{yessenin-criticism}, however it was
considered unclear whether it is possible to realize this distinction on
the basis of a logic \autocite{constructivism-review}, with some
claiming that such a logic could not be consistent
\autocite{wangs-paradox,strict-finitism-refuted-q}.

There exist logics that implicitly constrain computational complexity of
the proofs, for example Bounded Arithmetic
\autocite{buss,bounded-arithmetic,implicit-complexity} that is
restricted to computations in polynomial time. However, most of them are
significantly weaker than class of primitive recursive functions, which
is widely considered to contain most useful programs. This would put the
logician in a position of trying to state a widely known facts about
objects that are inexpressible within the logic.

\section{Discussion}\label{discussion}

\subsection{Explicit bounds versus implicit structural
recursion}\label{explicit-bounds-versus-implicit-structural-recursion}

It is long known that unbounded logics may give rise to paradoxes
\autocite{girard-paradox,analysis-girard-paradox}, and the use of
implicit techniques \autocite{implicit-complexity}, including bounded
recursion \autocite{bounded-arithmetic}, structural recursion
\autocite{structural-recursion}, well-founded sets
\autocite{well-founded-recursion}, or predicative bounding
\autocite{Feferman-predicativism} were developed.

Using explicit bounds provides a more obvious solution, which is easier
to prove correct, and parallels development of an explicit mathematical
limit \autocite{epsilontics,delta-epsilon}, starting from Eudoxus'
method of exhaustion \autocite{eudoxus}, through implicit notion of
terminus \autocite{terminus} to a modern concept of a mathematical limit
of a function \autocite{weierstrass}\footnote{Interestingly
  delta-epsilon definition is formalizable for computable functions, by
  assuming that as \(n\) approaches the limit the smaller computable
  environment is taken. Of course both delta and epsilon would have to
  be a finite expansions (approximations) instead of possibly
  transcendental value. This would give a definition of ``computable
  limit''.}.

\subsection{Open problem of directly proving bounded Turing
completeness}\label{open-problem-of-directly-proving-bounded-turing-completeness}

Note that the proof above mentions Turing completeness, if we can prove
that all bounds can be expressed by the bounds functions defined above.
While the usual examples of fast growing functions like Ackermann
\autocite{sudan,ackermann,oeis-simplified-ackermann}, or Goodstein
\autocite{goodstein,goodstein-unprovable-in-peano} are expressible by
bounded composition of functions, the clarity is still elusive. (Of
course such fast growing functions would quickly surpass any reasonable
limit.)

We still search for a proof of bounded Turing completeness that would
not use a recursive argument, where we replace bound functions with
arbitrary bounded lambda expressions. That is because our induction
principle would have to be more complex to include the latter.

Interested student may prove Turing completeness by encoding to Kleene
normal form with iteration on top \autocite{hermes}. In this logic,
Kleene normal form may be made explicit.

\subsection{Computability as foundation of
mathematics}\label{computable-foundations}

Finite descriptions of the proofs and their objects are most rational
foundations of mathematics. These objects are all definable by bounded
Turing computability.

Attempts to define \emph{hypercomputation} beyond bounded Turing machine
immediately lead to physical impossibilities
\autocite{intractable-hypercomputation}. At the same time computability
or bounded Turing machine and total computable functions have been
translated between multiple mathematical models. Hence we conjecture
that the only mathematical proof principle that is immune to rational
doubt is the bounded Turing machine, and ultrafinitist logic.

A logic that allows expression of any bounded Turing function and
nothing else could be rightly called a logic of computable functions,
and a best candidate for encoding foundations of mathematics.
Alternative attempts to narrow set theory by predicativism
\autocite{Feferman-predicativism,Schuette-predicativism} are subject to
critique\autocite{criticism-of-predicativism} that motivates further
search.

That is because we can encode axioms that are incomputable as function
parameters with assumed types, and use these to prove or disprove
theorems of traditional axiomatic theories without endangering
consistency of the underlying logical framework.

\subsection{Automated theorem proving}\label{automated-theorem-proving}

Interesting avenue for future work would be to define a full type
theory, dependently typed language and an automatic prover for these
inference rules. Improving on the bound of \(\bigoh{\binom{3n}{n}}\) for
deciding subtheorems would be possible, since we only need to consider
normal forms. It would be exciting to prove metatheoretic results about
the UFL in itself, and verify it with an automatic theorem prover.

Since meta-reasoning always results in longer proofs than original
theorems, the UFL may also allow us to prove consistency of
ultrafinitist arithmetic, enabling to second Hilbert problem
\autocite{hilbert-problems}, and potentially allowing self-verifiable
formalization of mathematics.

Theories for uncomputable are only indirectly formalisable within such
framework as functions taking uncomputable actions (like infinite
recursor of Peano arithmetic) as arguments. Previously created theories
are prone to high complexity and errors due to difficulty at maintaining
expressivity and consistency together. Simplicity of proving the
hierarchy of universes as hierarchy of complexities, and expanding
ultrafinitist logic with strongly normalizable dependent types gives us
hope that such automated theorem proving framework would be simpler.

Since the logic includes upper bounds for all functions, we may use
these and proof irrelevance to automatically and safely optimize proofs
as well. For example, we could automatically replace computation of
naturals defined by successor function with computation defined on
positional binary numbers.

\subsection{Proving decidability in strictly finite
domain}\label{proving-decidability-in-strictly-finite-domain}

The explicit bounding of all objects, including proofs in this work is
used to prevent undecidability within finite domain
\autocite{incompleteness-in-finite-domain}.

\section{Conclusion}\label{conclusion}

We have shown a possible consistent logic for inference with a strictly
bounded number of steps. This allows us to limit our statements by the
length of acceptable proof, and thus define statements that are both
true, and computable within Bremermann-Gorelik limit
\autocite{bremermann} This inference system explicitly bounds both the
length of the resulting proof, and the bounds on the depth of the
normalized result term. This allows avoiding inconsistencies suggested
by philosophical work, and at the same time steers away from relatively
weak logics with implicit complexity like Bounded Arithmetic
\autocite{bounded-arithmetic}, which capture polynomial time hierarchy.
It also shows how much we gain by making explicit bounds, since these
may be tighter than with implicit complexity approaches. While Emulation
Complexity is the powerful approach to proving expressivity of this
logic, it would be nice to see a proof with tighter bounds on what we
may prove with it\footnote{A promising avenue of work would be proving
  that \emph{amortized complexity} by replacing single bound variable by
  a vector of monotonic bound variables. Another approach would be
  attempt to obtain tighter bounds directly by separately counting
  beta-reduction steps and substitutions, instead of all reduction steps
  and substitutions together \autocite{tight-bounds}.}.

We strive to prove that all bounded computable functions are expressible
within this framework, and thus we propose this logic as a ``logic of
practical computability''.

\bibliography{ultrafinitist.bib}

\end{document}